% This is samplepaper.tex, a sample chapter demonstrating the
% LLNCS macro package for Springer Computer Science proceedings;
% Version 2.20 of 2017/10/04
%
\documentclass[a4paper,11pt]{fullverllncs}
\usepackage[left=2.5cm,top=2.5cm,right=2.5cm,centering]{geometry}
\usepackage{graphicx}

\usepackage{amsmath,amssymb}
\usepackage{graphicx}
\usepackage{ascmac}
\usepackage{array}
\usepackage{algpseudocode}
\usepackage{amsfonts}
\usepackage{amssymb}
\usepackage{amsmath}
\usepackage{algorithm, algpseudocode}
\usepackage{multirow}
\usepackage{arydshln}
\usepackage{hhline} 
\usepackage[misc,geometry]{ifsym} 

\usepackage[dvipdfmx, colorlinks]{hyperref, xcolor}
\definecolor{winered}{rgb}{0.5,0,0}
\definecolor{darkblue}{rgb}{0,0,0.5}
\definecolor{darkgreen}{rgb}{0,0.3,0}
\hypersetup{
linkcolor=winered,
citecolor=darkblue,
urlcolor=darkgreen
}
\urlstyle{rm}

\newcommand{\A}{\mathsf{A}}
\newcommand{\B}{\mathsf{B}}

\newcommand{\Adv}{\mathsf{Adv}}

\newcommand{\Z}{\mathbb{Z}}
\newcommand{\G}{\mathbb{G}}
\newcommand{\N}{\mathbb{N}}

\newcommand{\Inst}{\mathtt{inst}}
\newcommand{\instsign}{\mathtt{sign}}
\newcommand{\instskip}{\mathtt{skip}}
\newcommand{\accept}{\mathtt{accept}}
\newcommand{\reject}{\mathtt{reject}}

\newcommand{\BG}{\mathsf{BG}}
\newcommand{\BGcal}{\mathcal{BG}}
\newcommand{\Ours}{\mathsf{Ours}}

% DS shorthand
\newcommand{\DS}{\mathsf{DS}}
\newcommand{\sk}{\mathsf{sk}}
\newcommand{\pk}{\mathsf{pk}}

\newcommand{\pp}{\mathsf{pp}}

\newcommand{\Setup}{\mathsf{Setup}}
\newcommand{\KGen}{\mathsf{KGen}}
\newcommand{\Sign}{\mathsf{Sign}}
\newcommand{\Verify}{\mathsf{Verify}}
\newcommand{\Agg}{\mathsf{Agg}}
\newcommand{\AVer}{\mathsf{AVer}}
\newcommand{\EUFCMA}{\mathsf{EUF \mathchar`- CMA}}

\newcommand{\rmEUFCMA}{\mathrm{EUF}\mathchar`-\mathrm{CMA}}

\newcommand{\Cert}{\mathsf{Cert}}

\newcommand{\PS}{\mathsf{PS}}
\newcommand{\GPS}{\mathsf{GPS}}

\newcommand{\AGHo}{\mathsf{AGH1}}
\newcommand{\AGHt}{\mathsf{AGH2}}
\newcommand{\LLY}{\mathsf{LLY}}
\newcommand{\AS}{\mathsf{AS}}
\newcommand{\BGLS}{\mathsf{BGLS}}

\newcommand{\SAS}{\mathsf{SAS}}

\newcommand{\SASSign}{\mathsf{SAS.Sign}}

\newcommand{\SASAVer}{\mathsf{SAS.AVer}}

\spnewtheorem{assumption}{Assumption}{\bfseries}{\itshape}

\begin{document}

\title{Pointcheval-Sanders Signature-Based \\Synchronized Aggregate Signature\thanks{A preliminary version \cite{TT22} of this paper is appeared in Information Security and Cryptology  {ICISC} 2022  - 25th International Conference.}}
\author{Masayuki Tezuka\inst{1}\textsuperscript{(\Letter)} \and Keisuke Tanaka\inst{2}}
\authorrunning{M.Tezuka et al.}

\institute{National Institute of Technology, Tsuruoka College, Yamagata, Japan\\
\email{tezuka.m@tsuruoka-nct.ac.jp} \and
Tokyo Institute of Technology, Tokyo, Japan\\
\email{keisuke@is.titech.ac.jp}
}

\maketitle
\pagestyle{plain}
\noindent
\makebox[\linewidth]{March 31, 2023}              

\begin{abstract}
Synchronized aggregate signature is a special type of signature that all signers have a synchronized time period and allows aggregating signatures which are generated in the same period.
This signature has a wide range of applications for systems that have a natural reporting period such as log and sensor data, or blockchain protocol.

In CT-RSA 2016, Pointcheval and Sanders proposed the new randomizable signature scheme.
Since this signature scheme is based on type-3 pairing, this signature achieves a short signature size and efficient signature verification.

In this paper, we design the Pointchcval-Sanders signature-based synchronized aggregate signature scheme and prove its security under the generalized Pointcheval-Sanders assumption in the random oracle model.
Our scheme offers the most efficient aggregate signature verification among synchronized aggregate signature schemes based on bilinear groups.

\keywords{Synchronized aggregate signature \and Pointcheval-Sanders signature \and Bilinear groups}
\end{abstract}

\section{Introduction}
\subsection{Background}\label{IntroBack}
\paragraph{\bf Aggregate Signature.}

Aggregate signature originally introduced by Boneh, Gentry, Lynn, and Shacham \cite{BGLS03} allows anyone to compress many signatures produced by different signers on different messages into a short aggregate signature.
The size of an aggregate signature size is the same as any signature.
By verifying an aggregate signature, we can check the validity of all those individual signatures which are compressed into an aggregate signature.

These attractive features are useful for the internet of things (IoT) system to reduce the storage space for signatures and realize efficient verification of signatures.
An aggregate signature scheme is expected to be used in a wide range of applications such as Border Gateway Protocol (BGP) routing \cite{BGOY07}, certificate chain compression \cite{BGLS03}, bundling software updates \cite{AGH10}, sensor network data \cite{AGH10}, or blockchain protocol~\cite{HW18}.

Currently, only three aggregate signature scheme constructions are known.
The first construction by Boneh et al. \cite{BGLS03} is based on bilinear maps.
This scheme can aggregate signatures as well as already aggregated signatures (i.e., full aggregation) in any order.
The security of this scheme is proven under the co-computational Diffie-Hellman (co-CDH) assumption in the random oracle model (ROM).
However, their scheme has a drawback in that the verification cost of an aggregate signature is expensive.
Concretely, the number of pairing operations in verification for an aggregate signature is proportional to the number of signatures compressed into the aggregate signature.

The other schemes are constructed in the standard model (without the ROM).
The second scheme by Hohenberger, Sahai, and Waters \cite{HSW13} is based on multilinear maps.
The third scheme by Hohenberger, Koppula, and Waters \cite{HKW15} is an indistinguishability obfuscation (iO) based construction.
Since constructing aggregate signature schemes from standard computational assumptions without the ROM is a difficult task, several variants of aggregate signature with restricted aggregation have been proposed.

\paragraph{\bf Synchronized Aggregate Signature.}
One variant of aggregate signature is synchronized aggregate signature.
The concept of this signature was proposed by Gentry and Ramzan \cite{GR06}.
They constructed an identity-based aggregate signature that is based on the computational Diffie-Hellman (CDH) assumption in the ROM.

After their seminal work, Ahn, Green, and Hohenberger \cite{AGH10} revisited their model and proposed a synchronized aggregate signature.
In this scheme, all of the signers have a synchronized time period.
For each time period, each signer can sign a message at most once and signatures generated in the same time period only can be compressed into an aggregate signature.
Even though a synchronized aggregate signature scheme has restrictions described above, it is still useful for systems that have a natural reporting period.
(e.g. log data \cite{AGH10}, sensor data \cite{AGH10}, blockchain protocols \cite{HW18})

So far, several synchronized aggregate signature schemes were proposed.
Ahn, Green, and Hohenberger \cite{AGH10} gave a pairing-based based synchronized aggregate signature scheme based on the CDH assumption without the ROM.
Moreover, they also gave an efficient pairing-based based synchronized aggregate signature scheme whose security is proven under the CDH assumption in the ROM.

Lee, Lee, and Yung \cite{LLY13} gave a synchronized aggregate signature scheme based on the Camenisch-Lysyanskaya (CL) signature scheme \cite{CL04}.
The security of this scheme relies on an interactive assumption called Lysyanskaya-Rivest-Sahai-Wolf (LRSW) assumption \cite{LRSW99} in the ROM.
Tezuka and Tanaka \cite{TT20} revisited their security analysis result and improved it by showing the security based on a non-interactive assumption called the modified 1-strong Diffie-Hellman-2 (1-MSDH-2) assumption \cite{PS18} in the ROM.

As for a pairing-free scheme, Hohenberger and Waters \cite{HW18} proposed the synchronized aggregate signature scheme based on the RSA assumption without the ROM.

\paragraph{\bf Motivation: Efficient Synchronized Aggregate Signature.}
In pairing-based synchronized aggregate signature schemes, the scheme by Lee et al. \cite{LLY13} is the most efficient synchronized aggregate signature scheme.
Their scheme offers the smallest number of pairing operations (3 pairing operations) in an aggregate signature verification (See Fig.\ref{SASListed}). 
From the viewpoint of the efficiency of aggregate signature verification, it is desirable to construct a synchronized aggregate signature scheme with fewer pairing operations for aggregate signature verification.

\subsection{Our Result}
\paragraph{\bf Our Result.}
In this paper, we give a new synchronized aggregate signature scheme based on the Pointcheval-Sanders (PS) signature scheme \cite{PS16}.
The security of our scheme can be proven under the generalized Pointcheval-Sanders (GPS) assumption  \cite{KLAP21} in the ROM.

In general, compared to the computation cost of multiplication for elliptic curve points, the computation of pairing is more costly.
To clarify the advantages of our synchronized aggregate scheme, we compare our scheme with other schemes (See Fig.\ref{SASListed}).

\begin{figure}[h]
\begin{center}
\begin{tabular}{llccccccc}\hline

Scheme &Assumption & $pp$ & ~$\pk$~ & Agg & Agg Ver & Pairing & CertKey\\
 &   & size & size &size & (Pairing op) & type &model \\
\hline
\hline
$\AS_{\BGLS}{}^{\dagger}$ \cite{BGLS03} \S 3 &co-CDH + ROM  &$O(1)$ &1 &2 &$n+1$ & Type-2  & \checkmark\\
$\SAS_{\AGHo}$ \cite{AGH10} \S 4 &CDH  &$O(k)$ &1 &3 &$k+3$ & Type-3  &\checkmark  \\
$\SAS_{\AGHt}$ \cite{AGH10} \S A \ &CDH + ROM &$O(1)$ &1 &3 &4 & Type-3 &\checkmark \\
$\SAS_{\LLY}$ \cite{LLY13}&1-MSDH-2 + ROM \ &$O(1)$ &1 &2 &3 &  Type-1 &\checkmark \\
$\SAS_{\Ours}$ \S \ref{Secourcon}  &GPS + ROM &$O(1)$ &2 &2 &2 &  Type-3 &\checkmark \\
\hline
\end{tabular}\\
\end{center}

\caption{\small Comparison with pairing-based synchronized aggregate signature schemes.
In the column of $``$Assumption$"$, $``$ROM$"$ represents the random oracle model.
In the columns of $``$pp size$"$, $``\pk$ size$"$, $``$Agg size$"$ represent the number of elements in a public parameter pp, a public key $\pk$, and an aggregate signature, respectively.  
In the column of $``$Agg Ver (Pairing op)$"$ represents the number of pairing operations in the verification of an aggregate signature.
In the column of $``$CertKey model $"$, $``\checkmark"$ represents that the EUF-CMA security of the corresponding scheme is proven in the certified-key model.\\
In $\AS_{\BGLS}$, $n$ represents the number of original signatures which are aggregated into an aggregate signature.
$\AS_{\BGLS}$ can be used as a synchronized aggregate signature scheme, with the following trivial modifications.
A message $m$ is changed to a message-period pair $(m, t)$. Aggregation of signature is only allowed for signatures that are signed in the same time period $t$.
An aggregate signature of $\AS_{\BGLS}$ is composed $1$ element, but in other synchronized aggregate signature schemes, information of time period $t$ is included in an aggregate signature. 
For fair comparison to other synchronized aggregate signature schemes,  we include $t$ into an aggregate signature and count the number of elements in an aggregate signature as $2$.
Security of $\AS_{\BGLS}$ simply can be proven under the co-CDH assumption in the ROM under the aggregation restriction that signatures for the same message cannot be aggregated.
Without this aggregation restriction, $\AS_{\BGLS}$ can be used as a multi-signature, however, it falls victim to the rogue key attack which is known as a notorious attack for multi-signature schemes \cite{BDN18}.
In synchronized aggregate signature has a restriction that each signer issues a signature one-time for each period, but it allows aggregating signatures on the same message.
To prevent the rogue key attack, we should pose the certified-key model for $\AS_{\BGLS}$.
In  $\SAS_{\AGHo}$ has a $\ell \times k$-bits message space ($k$ chunks of $\ell$-bits message). \\
}

\label{SASListed}
\end{figure}

\paragraph{\bf Comparison with Other Schemes.}
The scheme $\BGLS$ \cite{BGLS03} is a full-aggregate signature scheme that offers optimal public-key size and aggregate signature size.
A full-aggregate signature scheme can be used as a synchronized aggregate signature scheme, with the following trivial modifications.
A message $m$ is changed to a message-period pair $(m, t)$.
Aggregation of signatures is only allowed for signatures that are signed in the same time period $t$.
However, if we use $\BGLS$ as a synchronized aggregate signature scheme, $n+1$ pairing operations are needed for verifying an aggregate signature where $n$ is the number of aggregated original signatures. 

The scheme $\SAS_{\AGHo}$ \cite{AGH10} is a synchronized aggregate signature scheme in the standard model.
In $\SAS_{\AGHo}$, message space is $\ell \times k$-bits message space. ($k$ chunks of $\ell$-bits strings).
If we set $k=1$ in $\SAS_{\AGHo}$, $k+3 = 4$ pairing operations is needed for verifying an aggregate signature.

The  $\SAS_{\AGHt}$ \cite{AGH10}  and  $\SAS_{\LLY}$ \cite{LLY13} are synchronized aggregate signature schemes in the random oracle model.
In both schemes, a public key is composed of $1$ group element.
 $\SAS_{\AGHt}$ needs $4$ paring operations and $\SAS_{\LLY}$  needs $3$ paring operations for verifying an aggregate signature, respectively.
Although a public key of our scheme is composed of $2$ group elements, our scheme only needs $2$ paring operations for verifying an aggregate signature.

Thus, compared with existing paring-based synchronized aggregate signature schemes, our scheme offers the fewest paring operations in a verification of an aggregate signature.
Our scheme offers the most efficient aggregate signature verification among synchronized aggregate signature schemes based on bilinear groups.

\subsection{Technical Overview}\label{TechOverview}
\paragraph{\bf How to Construct Our Signature Scheme.}
The core idea of our construction is based on the combination of randomizable signature, the ``public-key sharing technique" and the ``randomness re-use technique" \cite{LOSSW06}.
These technique are used to construct variants of aggregate signatures scheme \cite{LOSSW06,Sch11,LLY13,CK20}. 

Lee et al \cite{LLY13} used these techniques to construct a synchronized aggregate signatures scheme based on the CL signature scheme which is a randomizable signature scheme.
The security of these schemes can be proven by the security of the original (CL) signature scheme.

\paragraph{\bf Problem in Security Proof.}
However, it is not clear that it is possible to design a PS signature-based synchronized aggregate signature scheme with provable security.
Since existing CL signature-based synchronized aggregate signature scheme  $\SAS_{\LLY}$ \cite{LLY13}  is given in only type-$1$ pairing, a type-$3$ pairing variant of CL signature-based synchronized aggregate signature scheme is not known.

Our first attempt is to apply the public-key sharing technique and the randomness re-use technique to the PS signature scheme which is also a randomizable signature scheme.
In fact, we obtain the PS signature-based synchronized signature scheme but we fail to prove our scheme from the EUF-CMA security of the original (PS) signature scheme.

Now, we briefly explain the reason why the security proof technique in \cite{LLY13} fails in our scheme.
In $\SAS_{\LLY}$, a group element of a public-key and group elements of signature belong to the same group $\G$.
This fact allows signature simulation in the security proof of $\SAS_{\LLY}$ scheme.
In the security proof of $\SAS_{\LLY}$, by using the programmability of the random oracle model, a signature is generated by computing multiplications of public-key.

By contrast, in our construction, group elements of signature and a group element of signature belong to different groups (See Fig. \ref{PSASsigconst}).
Group elements of a public-key $(\tilde{X}, \tilde{Y})$ belong to the group  $\widetilde{\G}$ and a group element of signature $B$ belongs to the group $\G$.
If we try to generate a signature by multiplying public-key elements $\tilde{X}$ and $\tilde{Y}$, the result of the multiplication does not belong to $\G$.
Thus, the security proof technique by \cite{LLY13} cannot be applied to our scheme.

\paragraph{\bf Our Approach for Security Proof.}
To prove the security of our scheme, we use the generalized PS  (GPS) assumption \cite{KLAP21} which is a variant of the PS assumption \cite{PS16}.
These assumptions are classified into interactive assumptions.
The interactive assumption is that the computational problem is difficult for all probabilistic polynomial time adversary which tries to solve the problem even if oracle queries that are related to the problem are allowed.

Briefly, the difference between the PS assumption and the GPS assumption is equipped oracles (See Assumption \ref{PSassum} and Assumption \ref{GPSassum}).
The GPS assumption is obtained by changing the oracle equipped with the PS assumption as follows.
We divide the computation of the equipped oracle in the PS assumption into $2$ computation steps and replace the equipped oracle with 2 oracles that compute each step.
By using $2$ oracles in the GPS assumption, we prove the security of our scheme under the GPS assumption in the random oracle model.

\subsection{Related Works}

\paragraph{\bf Variants of Aggregate Signature.}
An aggregate signature can be categorized into various types from the point of view of aggregation restriction.
The full aggregate signature proposed by Boneh et. al \cite{BGLS03} allows any user to aggregate signatures generated by different signers.
Moreover, this scheme allows us to aggregate individual signatures as well as already aggregated signatures in any order.

Lysyanskaya, Micali, Reyzin, and Shacham \cite{LMRS04} proposed sequential aggregate signature.
This signature scheme allows a signer to add his signature to an aggregate signature in sequential order.

Synchronized aggregate signature scheme \cite{GR06,AGH10} allows signers to generate at most one signature for each period and aggregate signatures generated in the same period into an aggregate signature.

Chalkias, Garillot, Kondi, and Nikolaenko \cite{CGKN21} proposed the notion of half-aggregation.
Half-aggregation allows compressing signatures into an aggregate signature that has half size of the total signature size.

Hartung, Kaidel, Koch, Koch, and Rup \cite{HKKKR16} proposed fault-tolerant aggregate signature.
In this signature, as long as the number of invalid signatures aggregated does not exceed a certain bound, a verification algorithm can determine a subset of all messages belonging to an aggregate that were signed correctly. 

Goyal and Vaikuntanathan \cite{GV22} proposed locally verifiable aggregate signature. 
In this scheme, given an aggregate signature corresponding to the set of $M$ of $n$ messages, a local verification algorithm can check whether a particular message $m$ is in the set $M$.
Moreover, the runtime of a local verification algorithm is independent of $N$ and the local verification algorithm can be run without knowledge of the entire set $M$.

\paragraph{\bf Pointcheval-Sanders Signature.}
The Pointcheval-Sanders (PS) signature scheme \cite{PS16} is a randomizable signature scheme that allows anyone to refresh a valid signature $\sigma$ on a message $m$ to a new valid signature $\sigma'$ on the same message $m$.
Compared to the Camenisch-Lysyanskaya signature scheme \cite{CL04} which is also a randomizable signature scheme, this scheme offers a short signature size.

Security of this signature scheme was proven under the interactive assumption called the PS assumption \cite{PS16}.
In \cite{PS18}, Pointcheval and Sanders introduced the non-interactive assumption called the modified $q$-strong Diffie-Hellman-1 ($q$-MSDH-1) assumption.
They proved the weak-EUF-CMA security of the PS signature scheme from the $q$-MSDH-1 assumption.

The PS signature scheme (the PS assumption) and its variant are important starting points to construct signature schemes with functionalities.
(e.g. sequential aggregate signature \cite{PS16,McD20}, redactable signature \cite{McD20,San20},  threshold signature \cite{ADEO21}, group signature \cite{CS20,KLAP21,KSAP21,San21,ST21}, threshold group signature \cite{CDLNT20}, multi-signature \cite{CDLNT20}, updatable signature \cite{CRSST21})
Moreover, relationships between the PS signature and the structure-preserving signature have been studied.

Gardafi \cite{Gha21} introduced the notion of a partially structure-preserving signature. 
In a structure-preserving signature scheme \cite{AFGHO10}, all the messages, signatures, and public keys are group elements.
Partially-preserving signature is the same with the exception that the message space is $\mathbb{Z}^n_{p}$ where $n$ is an integer and $p$ is a prime.
They further proposed the notion of linear-massage strongly partially structure-preserving signature where the message is embedded in a linear manner.
This signature class includes the CL signature scheme and the PS signature scheme.
They proved some impossibility results and lower bound results for a linear-massage strongly partially structure-preserving signature and gave a generic transformation from a linear-massage strongly partially structure-preserving signature scheme to a structure-preserving signature scheme.

In recent work by Sedaghat, Slamanig, Kohlweiss, and Preneel \cite{SSKP22}, they introduced the notion of  a message-indexed structure-preserving signature which is a variant of a structure-preserving signature whose message is parameterized by a message indexing function. 
They gave a message-indexed structure-preserving signature scheme whose construction is inspired by the PS signature scheme and the structure-signature scheme by Ghadafi \cite{Gha16}.
Moreover, they proposed a notion of a structure-preserving threshold signature and gave a construction based on a message-indexed structure-preserving signature scheme.

\subsection{Road Map}
In Section \ref{Prelimi}, we recall pairing groups and a digital signature.
In Section \ref{SecSyncAggSig}, we review synchronized aggregate signature scheme and its security.
In Section \ref{SecSyncSASconst}, we review the PS signature scheme, provide a high-level idea of our construction, and give our synchronized aggregate signature and prove its security.

\section{Preliminaries}\label{Prelimi}
In this section, we introduce notations and review pairing groups and the Pointcheval Sanders assumption.
Then, we review a digital signature scheme.

\paragraph{\bf Notations.}
Let $1^{\lambda}$ be the security parameter. 
A function $f$ is negligible in $k$ if $f(k) \leq 2^{-\omega(\log k)}$.
For a positive integer $n$, we define $[n]: =\{1,\dots, n\}$.
For a finite set $S$, $s \xleftarrow{\$} S$ represents that an element $s$ is chosen from $S$ uniformly at random. 
For a group $\G$, we define $\G^* := \G \backslash \{1_{\G}\}$. 
For an algorithm $\A$, $y \leftarrow \A(x)$ denotes that the algorithm $\A$ outputs $y$ on input~$x$.
We abbreviate probabilistic polynomial time as PPT.

\subsection{Bilinear Group}
A pairing group is a tuple $\BGcal = (p, \G, \widetilde{\G}, \G_T, e)$ where $\G$, $\widetilde{\G}$ and $\G_T$ are cyclic group of prime order $p$ and $e:\G \times \widetilde{\G} \rightarrow \G_T$ is an efficient computable, non-degenerating bilinear map. (i.e., $e$ satisfies the following properties.)
\begin{enumerate}
\item For all $X \in \G$, $\widetilde{Y} \in \widetilde{\G}$ and $a, b \in \Z_{p}$, then $e(X^a, \widetilde{Y}^b) = e(X, \widetilde{Y})^{ab}$.
\item For all $G \in \G^*$, $\widetilde{G} \in \widetilde{\G}^*$, $e(G, \widetilde{G}) \neq 1_{\G_T}$.
\end{enumerate}
Type-3 pairing groups \cite{GPS08} are pairing groups which satisfy $\G \neq \widetilde{\G}$ and there is no efficiently computable homomophism from $\widetilde{\G}$ to $\G$.

We introduce a type-3 bilinear group generator.
A type-3 bilinear group generator $\BG$ is an algorithm that takes as an input a security parameter $1^{\lambda}$.
Then, it returns the descriptions of an asymmetric pairing $\BGcal = (p, \G, \widetilde{\G}, \G_T, e)$ where $p$ is a $\lambda$-bits prime.

Pointcheval and Sanders \cite{PS16} introduced the interactive assumption called Pointcheval-Sanders (PS) assumption.
This assumption holds in the generic group model~\cite{Sho97}.

\begin{assumption}[PS Assumption \cite{PS16}]\label{PSassum}
Let $\BG$  be a type-3 bilinear group generator and $\A$ be a PPT algorithm.
The Pointcheval-Sanders (PS) assumption over $\BG$ is defined by the game $\PS_{\BG}$ in Fig.\ref{PSgame}.

\begin{figure}[h]
\centering
\begin{tabular}{|l|}
\hline
GAME $\PS^{\A}_{\BG}(\lambda):$\\
~~~$Q \leftarrow \{\}$, $\BGcal=(p, \G, \widetilde{\G}, \G_T, e) \leftarrow \BG(1^{\lambda})$, $G \xleftarrow{\$} \G^*$, $\widetilde{G} \xleftarrow{\$}  \widetilde{\G}^*$,\\
~~~$x, y \xleftarrow{\$} \Z^*_p$, $\widetilde{X} \leftarrow \widetilde{G}^{x}$, $\widetilde{Y} \leftarrow \widetilde{G}^{y}$, $(A^*, B^*, m^* ) \leftarrow \A^{\mathcal{O}_{x,y}(\cdot)}(\BGcal, G^*, \widetilde{G}^*, \widetilde{X}, \widetilde{Y})$\\
~~~If $m^* \notin Q \land A^* \neq 1_{\G} \land B^*= (A^*)^{x+m^*\cdot y} $, return $1$. Otherwise, return $0$
\\
\\
$\mathcal{O}_{x, y} (m):$\\
~~~$Q \leftarrow Q \cup \{m\}$, $A \xleftarrow{\$} \G^*$, return $(A, A^{x + m \cdot y})$\\
\hline
\end{tabular}
\caption{\small
The game $\PS^{\A}_{\BG}$.}
\label{PSgame}
\end{figure}

The advantage of an adversary $\A$ in the game $\PS_{\BG}$ is defined by $\Adv^{\PS}_{\BG, \A}(\lambda) \allowbreak := \Pr[1 \Leftarrow \PS^{\A}_{\BG}(\lambda)]$.
We say that the PS assumption holds if $\Adv^{\PS}_{\BG, \A}(\lambda)$ is negligible in $\lambda$ for all PPT adversaries $\A$.  
\end{assumption}

Kim, Lee, Abdalla, and Park proposed the generalized Pointcheval-Sanders (GPS) assumption \cite{KLAP21}.
This assumption is a modification of the PS assumption in that  the oracle $\mathcal{O}_{x,y}(\cdot)$ in the PS assumption is divided into the following two oracles.
$\mathcal{O}^{\GPS}_{0}$ samples a group element $A$ and $\mathcal{O}^{\GPS}_{1}$ computes $B \leftarrow A^{x+m\cdot y}$ where $(A, m)$ is given to $\mathcal{O}^{\GPS}_{1}$ as an input.

\begin{assumption}[GPS Assumption \cite{KLAP21}]\label{GPSassum}
Let $\BG$  be a type-3 bilinear group generator and $\A$ be a PPT algorithm.
The generalized Pointcheval-Sanders (GPS) assumption over $\BG$ is defined by the game $\GPS_{\BG}$ in Fig.\ref{GPSgame}.

\begin{figure}[h]
\centering
\begin{tabular}{|l|}
\hline
GAME $\GPS^{\A}_{\BG}(\lambda):$\\
~~~$Q_{0}, Q_{1} \leftarrow \{\}$, $\BGcal=(p, \G, \widetilde{\G}, \G_T, e) \leftarrow \BG(1^{\lambda})$, $G \xleftarrow{\$} \G^*$, $\widetilde{G} \xleftarrow{\$}  \widetilde{\G}^*$,\\
~~~$x, y \xleftarrow{\$} \Z^*_p$, $\widetilde{X} \leftarrow \widetilde{G}^{x}$, $\widetilde{Y} \leftarrow \widetilde{G}^{y}$, $(A^*, B^*, m^* ) \leftarrow \A^{\mathcal{O}^{\GPS}_{0} (), \mathcal{O}^{\GPS}_{1} (\cdot, \cdot)}(\BGcal, G, \widetilde{G}, \widetilde{X}, \widetilde{Y})$\\
~~~If $(\cdot, m^*) \notin Q_{1} \land A^* \neq 1_{\G} \land B^*= (A^*)^{x+m^*\cdot y} $, return $1$. Otherwise, return $0$
\\
\\
~~~$\mathcal{O}^{\GPS}_{0} ():$\\
~~~~~~$A \xleftarrow{\$}  \G^*$, $Q_{0} \leftarrow Q_{0} \cup \{A\}$, return $A$\\
~~~$\mathcal{O}^{\GPS}_{1} (A, m \in \mathbb{Z}_{p}):$\\
~~~~~~If $(A \notin Q_{0} \lor (A, \cdot)  \in Q_{1})$, return $\bot$.\\
~~~~~~$B \leftarrow A^{x + m \cdot y}$, $Q_{1} \leftarrow Q_{1} \cup \{(A, m)\}$, return $B$.\\
\hline
\end{tabular}
\caption{\small
The game $\GPS^{\A}_{\BG}$.}
\label{GPSgame}
\end{figure}

The advantage of an adversary $\A$ in the game $\GPS_{\BG}$ is defined by $\Adv^{\GPS}_{\BG, \A}(\lambda) \allowbreak := \Pr[1 \Leftarrow \GPS^{\A}_{\BG}(\lambda)]$.
We say that the GPS assumption holds if $\Adv^{\GPS}_{\BG, \A}(\lambda)$ is negligible in $\lambda$ for all PPT adversaries $\A$.  
\end{assumption}

Kim et al. \cite{KLAP21} proved that the GPS assumption holds in the generic group model.
Moreover, Kim, Sanders, Abdalla, and Park \cite{KSAP21} analyzed the relationship among the PS assumption, the GPS assumption, and the symmetric discrete logarithm assumption.
More precisely, from their result, the following facts are clarified.

\begin{itemize}
\item If the GPS assumption holds, the PS assumption holds.
\item If the symmetric discrete logarithm assumption holds, the GPS assumption holds in the algebraic group model \cite{FKL18}.
\end{itemize}

\subsection{Digital Signature Scheme}
We review a digital signature scheme and its security notion.
\begin{definition}[Digital Signature Scheme]
A digital signature scheme $\DS$ consists of following four algorithms $(\Setup, \KGen, \allowbreak \Sign, \Verify)$.
\begin{itemize}
\item $\Setup (1^{\lambda}):$ A setup algorithm takes as an input a security parameter $1^{\lambda}$. It returns the public parameter $\pp$.
In this work, we assume that $\pp$ defines a message space and represents this space by  $\mathcal{M}_{\pp}$.
We omit a public parameter $\pp$ in the input of all algorithms except for $\KGen$.

\item $\KGen (\pp):$ A key-generation algorithm takes as an input a public parameter $\pp$. It returns a public key $\pk$ and a secret key $\sk$.

\item $\Sign (\sk, m):$ A signing algorithm takes as an input a secret key $\sk$ and a message $m$. It returns a signature $\sigma$.

\item $\Verify (\pk, m, \sigma):$ A verification algorithm takes as an input a public key $\pk$, a message $m$, and a signature $\sigma$.
It returns a bit $b \in  \{0, 1\}$.
\end{itemize}
\paragraph{\bf Correctness.}
$\DS$ satisfies correctness if for all $\lambda \in \N$, $\pp \leftarrow \Setup (1^{\lambda})$ for all $m \in \mathcal{M}_{\pp}$, $(\pk, \sk) \leftarrow \KGen(\pp)$, and $\sigma \leftarrow \Sign(\sk, m)$, $\Verify(\pk, m, \sigma) = 1$ holds.
\end{definition}

We review a security notion called the existentially unforgeable under chosen message attacks $(\rmEUFCMA)$ security for digital signature.

\begin{definition}[EUF-CMA Security]
The existentially unforgeable under chosen message attacks $(\rmEUFCMA)$ security of a digital signature scheme $\DS$ is defined as Fig. \ref{EUFCMAgame}.

\begin{figure}[h]
\centering
\begin{tabular}{|l|}
\hline
GAME $\EUFCMA^{\DS}_{\A}:$\\
~~~$Q \leftarrow \{\}$, $\pp \leftarrow \Setup (1^{\lambda})$, $(\pk, \sk) \leftarrow \KGen(\pp)$, $(m^*, \sigma^*) \leftarrow \A^{\mathcal{O}^{\Sign}(\cdot)}(\pp, \pk)$\\
~~~If $\Verify(\pk, m^*, \sigma^*) = 1 \land ~ m^* \notin Q$, return $1$. Otherwise return $0$.\\
\\
Oracle $\mathcal{O}^{\Sign}(m):$\\
~~~$Q \leftarrow Q \cup \{m\}$, $\sigma \leftarrow \Sign(\sk, m)$, return $\sigma$.\\
\hline
\end{tabular}
\caption{\small
The $\rmEUFCMA$ security game $\EUFCMA^{\DS}_{\A}$.}
\label{EUFCMAgame}
\end{figure}

The advantage of an adversary $\A$ for the $\rmEUFCMA$ security game is defined by $\Adv^{\EUFCMA}_{\DS, \A}:= \Pr[\EUFCMA^{\DS}_{\A} \Rightarrow 1]$.
$\DS$ satisfies $\rmEUFCMA$ security if for all PPT adversaries $\A$, $\Adv^{\EUFCMA}_{\DS, \A}$ is negligible in $\lambda$.
\end{definition}

\section{Synchronized Aggregate Signature} \label{SecSyncAggSig}
In this section, we review a synchronized aggregate signature scheme and it security model.

\subsection{Synchronized Aggregate Signature Scheme}
An aggregate signature \cite{BGLS03} allows us to compress an arbitrary number of individual signatures into a short aggregate signature.
A synchronized aggregate signature \cite{AGH10} is a variant of aggregate signature that all signers have a synchronized time clock or has an access to the public current time period.
For each time period $t$, each signer can sign a message at most once and anyone can aggregate signatures generated by different signers in the same period $t$.
A generated aggregate signature is the same size as an individual signature.

Now, we review a definiton of a synchronized aggregate signature.

\begin{definition} [Synchronized Aggregate Signature Scheme \cite{AGH10,GR06}]\label{SyncAggDef}
A synchronized aggregate signature scheme $\SAS$ for a bounded number of periods is a tuple of algorithms $(\Setup, \KGen,\allowbreak \Sign, \Verify,\allowbreak \Agg, \allowbreak \AVer)$.

\begin{itemize}
\item $\Setup(1^\lambda, 1^T): $ A setup algorithm takes as an input a security parameter $\lambda$ and the time period bound $T$.
It returns the public parameter $\pp$.
We assume that $\pp$ defines the message space $\mathcal{M}_{\pp}$.
We omit a public parameter $\pp$ in the input of all algorithms except for $\KGen$.

\item $\KGen (\pp):$ A key-generation algorithm takes as an input a public parameter $\pp$.
It returns a public key $\pk$ and a secret key $\sk$.

\item $\Sign (\sk, t, m):$ A signing algorithm takes as an input a secret key $\sk$, a time period $t \leq T$, and a message $m$.
It returns a signature $\sigma$.
We assume that the information of time period $t$ is contained in a signature $\sigma$.

\item $\Verify (\pk, m, \sigma):$ A verification algorithm takes as an input a public key $\pk$, a message $m$, and a signature $\sigma$.
It returns a bit $b \in \{0, 1\}$.

\item $\Agg ((\pk_i, m_i, \sigma_i)_{i \in [\ell]}):$
An aggregation algorithm takes as an input a list of tuple $(\pk_i, m_i, \sigma_i)_{i \in [\ell]}$.
It return either an aggregate signature $\Sigma$ or $\bot$.  
We assume that the information of time period $t$ is contained in an aggregate signature $\Sigma$.

\item $\AVer((\pk_i, m_i)_{i \in [\ell]}, \Sigma):$
An aggregate signature verification algorithm takes as an input a list of tuple $(\pk_i, m_i)_{i \in [\ell]}$ and an aggregate signature $\Sigma$.
It returns a bit $b \in \{0, 1\}$.
\end{itemize}
\paragraph{\bf Correctness.}
$\SAS$ satisfies correctness if for all $\lambda \in \mathbb{N}$, $T \in \mathbb{N}$, $\pp \leftarrow \Setup(1^\lambda, 1^{T})$, for any finite sequence of key pairs $(\pk_1, \sk_1),\dots (\pk_\ell, \sk_\ell) \leftarrow \KGen (\pp)$ where $\pk_i$ are all distinct, for any time period $t \leq T$, for any sequence of messages $(m_1, \dots m_\ell) \in \mathcal{M}_{pp}^{\ell}$, 
$\sigma_i \leftarrow \Sign (\sk_i, t, m_i)$ for $i \in [\ell]$, 
$\Sigma \leftarrow \Agg((\pk_i, m_i, \sigma_i)_{i \in [\ell]})$, we have
\begin{equation*}
\begin{split}
\Verify &(\pk_i, m_i, \sigma_i) = 1 {\rm \ for \ all \ } i \in [\ell] \land \AVer ((\pk_i, m_i)_{i \in [\ell]}, \Sigma) = 1.
\end{split}
\end{equation*}
\end{definition}

\subsection{Security for Synchronized Aggregate Signature}
We review a security model called the existentially unforgeable under chosen message attacks (EUF-CMA) security in the certified-key model.

Gentry and Ramzan \cite{GR06} introduced the existentially unforgeable under chosen message attacks ($\rmEUFCMA$) security for synchronized aggregate signature.
In this security model, a public parameter $\pp$ and a challenge public key $\pk^*$ are given to an adversary which tries to forge an aggregate signature without secret key $\sk^*$.
For each period $t$, the adversary allows to access signing oracle $\mathcal{O}^{\Sign}$ and obtain a signature for an arbitrary message.
This security guarantees that it is hard for an adversary to forge an aggregate signature that is valid and non-trivial.
Gentry and Ramzan \cite{GR06} constructed an identity-based synchronized aggregate signature scheme.

Ahn, Green, and Hohenberger \cite{AGH10} introduced the certified-key model for a synchronized aggregate signature.
In this model, signers must prove that a tuple of keys $(\pk, \sk)$ is generated honestly by an algorithm $\KGen$. 
To prove the honest generation of a public key $\pk$, the signer (adversaries for $\rmEUFCMA$) must submit a tuple $(\pk, \sk)$ to the certification oracle $\mathcal{O}^{\Cert}$.
Now, we review the $\rmEUFCMA$ security in the certified-key model.

\begin{definition}[EUF-CMA Security in the Certified-Key Model \cite{AGH10,LLY13}]
The existentially unforgeable under chosen message attacks $(\rmEUFCMA)$ security of a synchronized aggregate signature scheme $\SAS$ in the certified-key model is defined as Fig. \ref{SASEUFCMAgame}.

\begin{figure}[h]
\centering
\begin{tabular}{|l|}
\hline
GAME $\EUFCMA^{\SAS}_{\A}:$\\
~~~$Q \leftarrow \{\}$, $L \leftarrow  \{\}$, $t \leftarrow 1$, $\pp \leftarrow \Setup (1^{\lambda})$, $(\pk^*, \sk^*) \leftarrow \KGen(\pp)$, \\
~~~$((\pk^*_i, m^*_i)_{i \in [\ell^*]}, \Sigma^*) \leftarrow \A^{\mathcal{O}^{\Cert}(\cdot, \cdot), \mathcal{O}^{\Sign}(\cdot, \cdot)}(\pp, \pk^*)$\\
~~~If $(\SASAVer((\pk^*_i, m^*_i)_{i \in [\ell^*]}, \Sigma^*) = 1)$\\
~~~~~~$\land$ (for all  $i \in [\ell^*]$  such  that $\pk^*_j \neq \pk^*$, $\pk^*_j \in L$)\\
~~~~~~$\land$ ($\pk^*_{j^*} = \pk^* \land m^*_{j^*} \notin Q$ for some $j^* \in [\ell^*]$), return $1$.\\
~~~Otherwise return $0$.\\
\\
Oracle $\mathcal{O}^{\Cert}(\pk, \sk):$\\
~~~If the key pair $(\pk, \sk)$ is valid, $L \leftarrow L \cup \{\pk\}$ and return $``\accept"$.\\
~~~Otherwise, return $``\reject"$.\\
Oracle $\mathcal{O}^{\Sign}(\Inst, m):$\\
~~~If $\Inst = \instskip$, $t \leftarrow t+1$.\\
~~~Otherwise, $Q \leftarrow Q \cup \{m\}$, $\sigma \leftarrow\SASSign (\sk^*, t, m)$, $t \leftarrow t+1$, return $\sigma$.\\

\hline
\end{tabular}
\caption{\small
The $\rmEUFCMA$ security game in the certified-key model $\EUFCMA^{\SAS}_{\A}$.}
\label{SASEUFCMAgame}
\end{figure}

The advantage of an adversary $\A$ for the $\rmEUFCMA$ security game in the certified-key model is defined by $\Adv^{\EUFCMA}_{\SAS, \A}:= \Pr[\EUFCMA^{\SAS}_{\A} \Rightarrow 1]$.
$\SAS$ satisfies $\rmEUFCMA$ security in the certified-key model if for all PPT adversaries $\A$, $\Adv^{\EUFCMA}_{\SAS, \A}$ is negligible in $\lambda$.

\end{definition}

\section{PS Signature-Based Synchronized Aggregate Signature} \label{SecSyncSASconst}
In this section, we review the Pointcheval-Sanders (PS) signature scheme \cite{PS16}.
Then, we give a high-level idea of our synchronized aggregate signature scheme from the PS signature scheme and give our synchronized aggregate signature scheme.
Finally, we prove the security of our scheme from the EUF-CMA security of the PS signature scheme in the ROM.

\subsection{Pointcheval-Sanders Signature Scheme \cite{PS16}}\label{PSsigconst}
Pointcheval and Sanders \cite{PS16} proposed a short randomizable signature scheme.
We review the single-message Pointcheval-Sanders (PS) signature scheme $\DS_{\PS} = (\Setup_{\PS}, \KGen_{\PS}, \Sign_{\PS}, \Verify_{\PS})$.
The construction of their scheme is described in Fig.\ref{PSsigconst}.

\begin{figure}[h]
\centering
\begin{tabular}{|l|}
\hline
$\Setup_{\PS}(1^\lambda):$\\
~~~$\BGcal= (p, \G, \widetilde{\G}, \G_T, e) \leftarrow \BG(1^\lambda)$, return $\pp \leftarrow \BGcal$.\\
 $\KGen_{\PS}(\pp):$\\
~~~$\widetilde{G} \xleftarrow{\$} \widetilde{\G}^*$, $x, y \xleftarrow{\$} \mathbb{Z}_p^*$, $\widetilde{X} \leftarrow \widetilde{G} ^{x}$, $\widetilde{Y} \leftarrow \widetilde{G} ^{y}$, return $(\pk, \sk) \leftarrow ((\widetilde{G} , \widetilde{X}, \widetilde{Y}), (x, y))$.\\
$\Sign_{\PS}(\sk=(x, y), m):$\\
~~~$A \xleftarrow{\$}  \G^*$, $B \leftarrow A^{x + m \cdot y}$, return $\sigma \leftarrow (A, B)$.\\
$\Verify_{\PS}(\pk=(\widetilde{G} , \widetilde{X}, \widetilde{Y}), m, \sigma=(A, B)):$\\
~~~If $A \neq 1_{\G} \land e(A,  \widetilde{X}\widetilde{Y}^m) =e(B, \widetilde{G})$, return $1$. Otherwise return $0$.\\
\hline
\end{tabular}
\caption{\small
The single-message PS signature scheme $\DS_{\PS}$. }
\label{PSsigconst}
\end{figure}

\begin{theorem}[\cite{PS16}]\label{PSfromPSassumption}
If the Pointcheval-Sanders (PS) assumption holds, $\DS_{\PS}$ satisfies the $\rmEUFCMA$ security.
\end{theorem}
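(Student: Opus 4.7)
The plan is to give a tight black-box reduction from an adversary $\A$ against the $\rmEUFCMA$ security of $\DS_{\PS}$ to an adversary $\B$ against the PS assumption. The PS challenge instance $(\BGcal, G, \widetilde{G}, \widetilde{X}, \widetilde{Y})$ has exactly the same distribution as a public parameter / public key pair produced by $(\Setup_{\PS}, \KGen_{\PS})$, so $\B$ simply hands $\pp \leftarrow \BGcal$ and $\pk \leftarrow (\widetilde{G}, \widetilde{X}, \widetilde{Y})$ to $\A$. Signing queries are answered by forwarding: whenever $\A$ queries its signing oracle on a message $m$, $\B$ queries its own oracle $\mathcal{O}_{x,y}(m)$, receives $(A, B) = (A, A^{x + m y})$ for a fresh $A \xleftarrow{\$} \G^*$, and returns $\sigma \leftarrow (A, B)$ to $\A$. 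This is distributed identically to a genuine $\Sign_{\PS}(\sk, m)$ output, so the simulation of $\A$'s view is perfect.

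When $\A$ terminates with a forgery $(m^*, \sigma^* = (A^*, B^*))$, $\B$ outputs $(A^*, B^*, m^*)$ as its answer to the PS challenge. The key step is arguing that a valid forgery for $\DS_{\PS}$ satisfies the algebraic relation required by the PS game. Acceptance means $A^* \neq 1_{\G}$ and $e(A^*, \widetilde{X}\widetilde{Y}^{m^*}) = e(B^*, \widetilde{G})$. Writing $A^* = G^{\alpha}$ and $B^* = G^{\beta}$ for some $\alpha, \beta \in \Z_p$ (with $\alpha \neq 0$), bilinearity gives $e(G, \widetilde{G})^{\alpha(x + m^* y)} = e(G, \widetilde{G})^{\beta}$, and non-degeneracy forces $\beta = \alpha(x + m^* y) \pmod p$, hence $B^* = (A^*)^{x + m^* y}$.

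Finally, the winning conditions transfer: the set $Q$ of queried messages in the PS game is, by construction, identical to the set $Q$ in the $\rmEUFCMA$ game, so $m^* \notin Q$ in one implies the same in the other. Combining these observations, $\B$ wins the PS game exactly whenever $\A$ wins the $\rmEUFCMA$ game, so $\Adv^{\PS}_{\BG, \B}(\lambda) = \Adv^{\EUFCMA}_{\DS_{\PS}, \A}(\lambda)$; if the PS assumption holds, the latter is negligible for every PPT $\A$.

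There is essentially no technical obstacle here, which is why this direction is the ``easy'' one; the only subtlety worth spelling out carefully is the step that converts the pairing-equation verification condition into the exponent-level relation $B^* = (A^*)^{x + m^* y}$, since it is this conversion that lets the reduction reuse $\A$'s forgery verbatim as a PS solution.
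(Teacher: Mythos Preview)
Your reduction is correct and is exactly the standard tight reduction: the PS challenger's instance is distributed as $(\pp,\pk)$, the PS oracle $\mathcal{O}_{x,y}$ is precisely the signing oracle, and the verification equation together with non-degeneracy forces $B^* = (A^*)^{x+m^*y}$, so any valid fresh forgery is a PS solution with no loss.

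Note, however, that the paper does not actually supply its own proof of this theorem; it is stated with a citation to \cite{PS16} and left unproven in the present paper. So there is no in-paper argument to compare against. What you wrote is essentially the proof from the original Pointcheval--Sanders paper, and there is nothing to add or correct.
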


\subsection{High-Level Idea of Our Construction}
We give a high-level idea of our synchronized-aggregate signature construction from the PS signature scheme $\DS_{\PS}$.
Let $(\pk_i, \sk_i) = ((\widetilde{G}_i, \widetilde{X}_i, \widetilde{Y}_i), \allowbreak (x_i, y_i))$ be a key pair of the signer $i$ in $\DS_{\PS}$.
The signature $\sigma_i$ on a message $m_i$ signed by $\sk_i$ is formed as $\sigma_i = (A_i, B_i = A_i^{x_i + m_i \cdot y_i})$ where $A_i \xleftarrow{\$} \G^*$.

To construct our synchronized-aggregate signature, we apply the ``public-key sharing technique" and the ``randomness re-use technique" \cite{LOSSW06}.
These techniques are used to construct variants of aggregate signatures \cite{LOSSW06,Sch11,LLY13,CK20,TT20}. 
We explain how to apply these techniques to $\DS_{\PS}$.

First, we consider applying the ``public-key sharing technique".
In this technique, one of element in public key of underlying scheme is replaced by the public parameter.
We change $\pk_i$ as $(\widetilde{X}_i, \widetilde{Y}_i)$ and force signers to use same $\widetilde{G}_i$.
That is, we include $\widetilde{G} = \widetilde{G}_i$ into the public parameter of the scheme.

Second, we consider applying the ``randomness re-use technique".
This technique forces all signers to use the same randomness to sign a message.
If all of signer share same $A_i$, a signature $\sigma$ on a message $m_i$ by each signer $i$ is formed as $(A, B_i = A^{x_i + m_i \cdot y_i})$.
Then, we can compress signatures $\{\sigma_i\}_{i \in [\ell]}$ into an aggregate signature $\Sigma = (A, \prod_{i \in [\ell]} B_i = A^{\sum_{i \in [\ell]} {(x_i + m_i \cdot y_i)}})$.

To share the same randomness $A$ to all signers for each time period $t$, we change $A$ to $H_1(t)$ where $H_1: [T] \rightarrow \G^*$ is a hash function.
Hashing the time as group element has been used to construct variants of aggregate signature schemes \cite{LLY13,LEOM15}.
Moreover, to prove the security, we modify $m_i$ to $H_2(t, m_i)$ where $H_2: [T] \times \{0, 1\}^* \rightarrow \mathbb{Z}_p$ is a hash function.

\subsection{Our Synchronized Aggregate Signature Scheme}\label{Secourcon}
We describe our synchronized aggregate signature scheme $\SAS_{\Ours} = (\Setup_{\Ours} , \allowbreak \KGen_{\Ours} ,\allowbreak \Sign_{\Ours} , \allowbreak\Verify_{\Ours} ,\allowbreak \Agg_{\Ours} , \allowbreak \AVer_{\Ours})$.
The construction of our synchronized aggregate signature scheme is described in Fig.\ref{PSASsigconst}.

\begin{figure}[h]
\centering
\begin{tabular}{|l|}
\hline
$\Setup_{\Ours} (1^\lambda, 1^{T}):$\\
~~~$\BGcal= (p, \G, \widetilde{\G}, \G_T, e) \leftarrow \BG(1^\lambda)$, $\widetilde{G} \xleftarrow{\$} \widetilde{\G}^*$.\\
~~~Choose hash functions: $H_1: [T] \rightarrow \G^*$, $H_2: [T] \times \{0, 1\}^* \rightarrow \mathbb{Z}_p$. \\
~~~Return $\pp \leftarrow (\BGcal, \widetilde{G}, H_1, H_2)$.\\
$\KGen_{\Ours} (\pp):$\\
~~~$x, y \xleftarrow{\$} \mathbb{Z}^*_p$, $\widetilde{X} \leftarrow \widetilde{G}^{x}$, $\widetilde{Y} \leftarrow \widetilde{G}^{y}$, return $(\pk, \sk) \leftarrow ((\widetilde{X}, \widetilde{Y}), (x, y))$.\\
$\Sign_{\Ours} (\sk=(x, y), t, m):$\\
~~~$m' \leftarrow H_2(t,m)$, $B \leftarrow H_1(t)^{x + m' \cdot y}$, return $(B, t)$.\\
$\Verify_{\Ours} (\pk=(\widetilde{X}, \widetilde{Y}), m, \sigma):$\\
~~~$m' \leftarrow H_2(t,m)$, parse $\sigma$ as $(B, t)$.\\
~~~If $e(H_1(t), \widetilde{X}\widetilde{Y}^{m'}) = e(B, \widetilde{G})$, return $1$. Otherwise return $0$.\\
$\Agg_{\Ours} ((\pk_i, m_i, \sigma_i)_{i \in [\ell]}):$\\
~~~For $i=1$ to $\ell$, parse $\sigma_i$ as $(B_i, t_i)$.\\
~~~If there exists $i \in \{2,\dots, \ell\}$ such that $t_i \neq t_1$, return $\bot.$\\
~~~If there exists $(i, j) \in [\ell]\times [\ell]$ such that $i \neq j \land \pk_i = \pk_j$, return $\bot$.\\
~~~If there exists $i \in [\ell]$ suth that $\Verify_{\Ours} (\pk_i, m_i, \sigma_i) \neq 0$, return~$\bot$.\\
~~~$B' \leftarrow \prod^{\ell}_{i=1} B_i$, return $\Sigma \leftarrow (B', t)$.\\
$\AVer_{\Ours} ((\pk_i, m_i)_{i \in [\ell]}, \Sigma):$\\
~~~There exists $(i, j) \in [\ell]\times [\ell]$ such that $i \neq j \land \pk_i = \pk_j$, return $0$.\\
~~~For $i= 1$ to $\ell$, $m'_i \leftarrow H_2(t, m_i)$.\\
~~~Parse $\Sigma$ as $(B', t)$.\\
~~~If $e(H_1(t), (\prod^{\ell}_{i=1}\widetilde{X_i}\widetilde{Y_i}^{m'_i})) = e(B', \widetilde{G})$, return $1$. Otherwise return $0$.\\
\hline
\end{tabular}
\caption{\small
Our synchronized aggregate signature scheme $\SAS_{\Ours}$. }
\label{PSASsigconst}
\end{figure}

\paragraph{\bf Correctness.}
We confirm the correctness of our scheme $\SAS_{\Ours}$. 
Let $\pp \leftarrow \Setup_{\Ours} (1^\lambda, 1^{T})$, $t \in [T]$, $(\pk_i, \sk_i) \leftarrow \KGen_{\Ours} (\pp)$ for $i \in [\ell]$ and $\sigma_i \leftarrow \Sign_{\Ours} (\sk_i, t, m_i)$ for $i \in [\ell]$ where $\pk_i$ are all distinct.
First, we check the correctness of a non-aggregated signature.
For each $i \in [\ell]$,   $B_i = H_1(t)^{x_i +  H_2(t,m_i) \cdot y_i}$ holds where $\sigma_i = (B_i, t)$ and $\sk_i = (x_i, y_i)$.
By these fact, $e(H_1(t), \widetilde{X_i}\widetilde{Y_i}^{H_2(t,m_i)}) = e(B_i, \widetilde{G})$ holds where $\pk_i = (\widetilde{X}_i, \widetilde{Y}_i)$.
Thus, we can see that the correctness of a non-aggregated signature $\sigma_i$ holds. 

Next, we check the correctness of an aggregate signature.
Let $\Sigma = (B', t) \leftarrow \Agg_{\Ours} ((\pk_i, m_i, \sigma_i)_{i \in [\ell]})$.
Then, $B' = \prod^{\ell}_{i=1} B_i = \prod^{\ell}_{i=1}(H_1(t)^{x_i +  H_2(t,m_i) \cdot y_i}) = H_1(t)^{\sum^{\ell}_{i=1}(x_i +  H_2(t,m_i) \cdot y_i)}$ holds.
By these fact, $e(H_1(t), \prod^{\ell}_{i=1}(\widetilde{X_i}\widetilde{Y_i}^{H_2(t,m_i)})) =  e(H_1(t), \widetilde{G}^{\sum^{\ell}_{i=1}(x_i + H_2(t,m_i) \cdot y_i)})  = e(B_i, \widetilde{G})$ holds.
Thus, we can see that the correctness of aggregate signature $\Sigma$ holds.

\subsection{Security Analysis}\label{OurSchemeProof}
As explained in Section \ref{TechOverview}, security proof technique by Lee et al. \cite{LLY13} cannot be applicable.
Instead, we prove  the $\rmEUFCMA$ security of our scheme $\SAS_{\Ours}$ from the GSP assumption.

\begin{theorem}\label{MCLSyncASEUFCMA}
Let $H_1, H_2$ be a hash function of $\SAS_{\Ours}$ in Fig.\ref{PSsigconst} and $T$ is a polynomial in $\lambda$.
If the GPS assumption holds and $H_1, H_2$ are modeled as the random oracle, our scheme $\SAS_{\Ours}$ satisfies the $\rmEUFCMA$ security in the certified-key model.
\end{theorem}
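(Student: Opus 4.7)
The plan is to construct a PPT reduction $\B$ that turns any successful EUF-CMA adversary $\A$ against $\SAS_{\Ours}$ in the certified-key model into a GPS solver with essentially the same advantage. Given a GPS instance $(\BGcal, G, \widetilde{G}, \widetilde{X}, \widetilde{Y})$ and access to the oracles $\mathcal{O}^{\GPS}_{0}, \mathcal{O}^{\GPS}_{1}$, $\B$ sets $\pp := (\BGcal, \widetilde{G}, H_1, H_2)$ and $\pk^* := (\widetilde{X}, \widetilde{Y})$ and invokes $\A(\pp, \pk^*)$, while simulating the random oracles $H_1, H_2$, the certification oracle, and the signing oracle internally. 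Since $(\widetilde{X}, \widetilde{Y})$ is generated by the GPS game exactly as in $\KGen_{\Ours}$, the initial view of $\A$ is distributed identically to its view in the real EUF-CMA game.

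I would simulate $H_2$ by lazy sampling uniformly in $\Z_p$, and program $H_1(t)$ via a fresh call $A \leftarrow \mathcal{O}^{\GPS}_{0}()$ the first time a period $t$ is referenced. The certification oracle is handled standardly by checking $\widetilde{X}_i = \widetilde{G}^{x_i}$ and $\widetilde{Y}_i = \widetilde{G}^{y_i}$ and recording each certified key pair in a private list. A signing request at the current period $t$ on message $m$ is answered by first ensuring $H_1(t)$ has been programmed and then returning $(B, t)$ with $B \leftarrow \mathcal{O}^{\GPS}_{1}(H_1(t), H_2(t, m))$. Because the period counter advances on every signing-oracle invocation, each $H_1(t)$ value is passed to $\mathcal{O}^{\GPS}_{1}$ at most once, so the restriction $(A, \cdot) \notin Q_1$ in the GPS game is always satisfied and the simulation is perfect.

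When $\A$ outputs a valid forgery $((\pk^*_i, m^*_i)_{i \in [\ell^*]}, \Sigma^* = (B'^*, t^*))$ with $\pk^*_{j^*} = \pk^*$ and $m^*_{j^*} \notin Q$, $\B$ programs $H_1(t^*)$ if not yet set, puts $A^* := H_1(t^*)$, recovers the secret keys $(x_i, y_i)$ of the certified non-target signers (which exist because $\pk^*_i \in L$ for all $i \neq j^*$), and ``peels off'' their contributions:
\begin{equation*}
B^* := B'^* \cdot (A^*)^{- \sum_{i \neq j^*} ( x_i + H_2(t^*, m^*_i) \cdot y_i )}.
\end{equation*}
The aggregate verification equation combined with the non-degeneracy of $e$ then yields $B^* = (A^*)^{x + H_2(t^*, m^*_{j^*}) \cdot y}$; since $\mathcal{O}^{\GPS}_{0}$ samples from $\G^*$, also $A^* \neq 1_{\G}$. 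Hence $\B$ outputs $(A^*, B^*, H_2(t^*, m^*_{j^*}))$ as its GPS candidate.

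The hard part is establishing the GPS freshness condition $(\cdot, H_2(t^*, m^*_{j^*})) \notin Q_1$, and this is exactly where the random-oracle modeling of $H_2$ is essential. By construction, $Q_1$ contains only pairs $(H_1(t), H_2(t, m_t))$ for periods $t$ at which some $m_t \in Q$ was signed; because $m^*_{j^*} \notin Q$, the pair $(t^*, m^*_{j^*})$ differs from every such $(t, m_t)$ as an input to $H_2$, so independence of random-oracle values bounds the bad-collision probability by $q_{\Sign}/p$, which is negligible in $\lambda$. This step also explains why the proof of \cite{LLY13} does not transfer directly: the separation of the PS oracle into $\mathcal{O}^{\GPS}_{0}$ (producing the $H_1(t)$ values) and $\mathcal{O}^{\GPS}_{1}$ (producing the signatures) is precisely what lets $\B$ program the random oracle independently of the sign queries, whereas a single PS oracle would couple these steps. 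Combining the bounds gives $\Adv^{\GPS}_{\BG, \B}(\lambda) \geq \Adv^{\EUFCMA}_{\SAS_{\Ours}, \A}(\lambda) - q_{\Sign}/p$, which establishes the theorem.
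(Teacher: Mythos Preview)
Your reduction is the same as the paper's and the extraction step is correct. There is, however, one small but genuine gap: your claim that ``the simulation is perfect'' because ``each $H_1(t)$ value is passed to $\mathcal{O}^{\GPS}_{1}$ at most once'' conflates periods with hash values. It is true that each period $t$ is signed at most once, but $H_1$ is programmed by fresh calls to $\mathcal{O}^{\GPS}_{0}()$, which samples uniformly from $\G^*$; hence two distinct periods $t_1\neq t_2$ may satisfy $H_1(t_1)=H_1(t_2)$ with nonzero probability. If both are signed, the second call $\mathcal{O}^{\GPS}_{1}(H_1(t_2),\cdot)$ returns $\bot$ because $(H_1(t_2),\cdot)=(H_1(t_1),\cdot)\in Q_1$, and your simulation aborts. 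The paper accounts for exactly this event and adds a $q_{\Sign}^{2}/(p-1)$ term to the loss; your final inequality should be
\[
\Adv^{\GPS}_{\BG,\B}(\lambda)\;\ge\;\Adv^{\EUFCMA}_{\SAS_{\Ours},\A}(\lambda)\;-\;\frac{q_{\Sign}^{2}}{p-1}\;-\;\frac{q_{\Sign}}{p}.
\]
Apart from this missing collision term (and the correspondingly overstated ``perfect'' simulation), your argument coincides with the paper's proof, including the use of $\mathcal{O}^{\GPS}_{0}$ to program $H_1$, $\mathcal{O}^{\GPS}_{1}$ to answer signing queries, the peeling-off of certified signers' contributions, and the $H_2$-collision bound for GPS freshness.
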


\begin{proof}

Let $\A$ be an $\rmEUFCMA$ security game adversary of the $\SAS_{\Ours}$ scheme with $q_{H_2}$ hash queries to $\mathcal{O}^{H_2}$.
We construct an adversary $\B$ for the GPS security game of  $\BG_{\GPS}$ by using $\A$. 
The construction of $\B$ is given in Fig.\ref{Ourreduction}.

\begin{figure}[htbp]
\centering
\begin{tabular}{|l|}
\hline
$\B^{\mathcal{O}^{\GPS}_{0} (), \mathcal{O}^{\GPS}_{1} (\cdot, \cdot)}(\BGcal, G, \widetilde{G}, \widetilde{X}^*, \widetilde{Y}^*)$\\
~~~$\mathbb{T}_1 \leftarrow \{\}$, $\mathbb{T}_2 \leftarrow \{\}$, $Q \leftarrow \{\}$, $C \leftarrow \{\}$, $L \leftarrow \{\}$, $K \leftarrow \{\}$,\\
~~~$\pp^* \leftarrow (\BGcal, \widetilde{G})$, $\pk^* \leftarrow (\widetilde{X}^*, \widetilde{Y}^*)$, $t \leftarrow 1$\\
~~~$((\pk^*_i, m^*_i)_{i \in [\ell^*]}, \Sigma^*) \leftarrow \A^{\mathcal{O}^{\Cert}(\cdot, \cdot), \mathcal{O}^{H_1}(\cdot), \mathcal{O}^{H_2}(\cdot, \cdot), \mathcal{O}^{\Sign}(\cdot, \cdot)}(\pp^*, \pk^*)$\\
~~~If $\AVer_{\Ours} ((\pk^*_i, m^*_i)_{i \in [\ell^*]}, \Sigma^*) \neq 1$, then abort.\\
~~~If there exists $j \in [\ell^*]$ such that $\pk^*_j \neq \pk^* \land \pk^*_j \notin L$, then abort.\\
~~~If there is no $j^* \in [\ell^*]$ such that $\pk^*_{j^*} = \pk^* \land m^*_{j^*} \notin Q$, then abort.\\
~~~Set $j^* \in [\ell^*]$ such that $\pk^*_{j^*} = \pk^* \land m^*_{j^*} \notin Q$, $\Sigma^* \leftarrow (B^*{}', t^*)$.\\
~~~$m^*_{j^*}{}' \leftarrow H_2(t^*,m^*_{j^*})$\\
~~~\fbox{If $m^*_{j^*}{}' \in C$, then abort.}\\
~~~Retrive $(x_i, y_i) \leftarrow \sk^*_i$ of $\pk^*_i$ from $K$ for $i \in [\ell^*]\backslash \{j^*\}$.\\
~~~$A' \leftarrow H_1(t^*)$, $m'_{i} \leftarrow H_2(t^*,m^*_{i})$ for $i \in [\ell^*]\backslash \{j^*\}$, \\
~~~$B' \leftarrow B^*{}' \cdot \left(A'{}^{\sum_{i \in [\ell^*]\backslash \{j^*\}} (x_i + m'_{i} \cdot y_{i})} \right)^{-1}$.\\
~~~Return $(m^*_{j^*}, A', B')$.\\
\\
$\mathcal{O}^{\Cert}(\pk=(\widetilde{X}, \widetilde{Y}), \sk=(x, y)):$\\
~~~If $(\widetilde{X} = \widetilde{G}^{x}) \land (\widetilde{Y} = \widetilde{G}^{y})$, $L \leftarrow L \cup \{\pk\}$, $K \leftarrow K \cup \{(\pk, \sk)\}$, return $``\accept"$.\\
~~~Otherwise return $``\reject"$.\\
$\mathcal{O}^{H_1}(t_i):$\\
~~~If there is an entry $(t_i, A_i)$ for some $A_i \in  \G^*$ in $\mathbb{T}_1$, return $A_i$.\\
~~~$A_i \leftarrow \mathcal{O}^{\GPS}_{0} ()$, $\mathbb{T}_1 \leftarrow \mathbb{T}_1 \cup \{(t_i, A_i)\}$, return $A_i$.\\
$\mathcal{O}^{H_2}(t_i, m_j):$\\
~~~If there is an entry $(t_i, m_j, m'_{(t_i,j)})$ for some $m'_{(t_i,j)} \in \mathbb{Z}_p$ in $\mathbb{T}_2$, return~$m'_{(t_i,j)}$.\\
~~~$m'_{(t_i,j)} \xleftarrow{\$} \mathbb{Z}_p$, $\mathbb{T}_2 \leftarrow \mathbb{T}_2 \cup \{(t_i, m_j, m'_{(t_i,j)})\}$, return~$m'_{(t_i,j)}$.\\
$\mathcal{O}^{\Sign}(``\Inst", m_j):$\\
~~~$t \notin [T]$, return $\bot$.\\
~~~If $``\Inst" = ``\instskip"$, $t \leftarrow t +1$.\\
~~~If $``\Inst" = ``\instsign"$,\\
~~~~~~If there is no entry $(t, \cdot)$ in $\mathbb{T}_1$, run $\mathcal{O}^{H_1}(t)$.\\
~~~~~~If there is no entry $(t, m_j, \cdot)$ in $\mathbb{T}_2$, run $\mathcal{O}^{H_2}(t, m_j)$.\\
~~~~~~Retrieve entries $(t, A)$ and  $(t, m_j, m'_{(t,j)})$ from $\mathbb{T}_1$ and $\mathbb{T}_2$, respectively.\\ 
~~~~~~$B \leftarrow  \mathcal{O}^{\GPS}_{1} (A, m'_{(t,j)})$.\\
~~~~~~If $B = \bot$, abort the simulation.\\
~~~~~~$Q \leftarrow Q \cup \{m_j\}$, $C \leftarrow C \cup \{m'_{(t,j)}\}$, return $\sigma \leftarrow (B, t)$, $t \leftarrow t +1$.\\
\hline
\end{tabular}
\caption{\small
The reduction $\B$. }
\label{Ourreduction}
\end{figure}

We confirm that if $\B$ does not abort, $\B$ simulates the $\rmEUFCMA$ game for $\SAS_{\Ours}$.
Now, we discuss the distribution of $\pp*$, $\pk^*$, output of oracles $\mathcal{O^{\Cert}}$, $\mathcal{O}^{H_1}$, $\mathcal{O}^{H_2}$, and $\mathcal{O}^{\Sign}$
\begin{itemize}
\item {\bf Distribution of $\pp^*$ and $\pk^*$:}
It is clear that $\B$ simulates $\pp$ and $\pk$ in the $\rmEUFCMA$ game for the $\SAS_{\Ours}$.

\item {\bf Output of $\mathcal{O^{\Cert}}$:} 
It is clear that $\B$ simulates $\mathcal{O^{\Cert}}$ in the $\rmEUFCMA$ game for the $\SAS_{\Ours}$ in the certified-key model.

\item {\bf Output of $\mathcal{O}^{H_1}$:} 
In the original game, hash values of $H_1$ are chosen from $\G^*$ uniformly at random.
In the simulation of $\B$, the hash value $H(t_i)$ is set by $A_i \leftarrow \mathcal{O}^{\GPS}_{0}$.
Since $ \mathcal{O}^{\GPS}_{0}$ samples $A_i$ from $\G^*$ uniformly at random, $\B$ perfectly simulates $\mathcal{O}^{H_1}$.

\item {\bf Output of $\mathcal{O}^{H_2}$:} 
It is clear that $\B$ simulates $\mathcal{O}^{H_2}$.

\item {\bf Output of $\mathcal{O^{\Sign}}$:}
In the simulation of $\B$, by the programming of $\mathcal{O}^{H_1}$ and $\mathcal{O}^{H_2}$,  $H_1(t)=A$ and $H_2(t, m_j) = m'_{(t,j)}$ hold.
If $B \neq \bot$,  $\mathcal{O}^{\GPS}_{1} (A, m'_{(t,j)})$ returns $B = A^{x} \cdot A^{m'_{(t,j)} \cdot y} = H_1(t)^{x + H_2(t, m_j)  \cdot y}$.
Thus if $\B$ does not abort, $\B$ simulate $\mathcal{O^{\Sign}}$.
\end{itemize}
From the above discussion, we can see that $\B$ does not abort, $\B$ can simulate the $\rmEUFCMA$ game for $\SAS_{\Ours}$.

Second, we confirm that if $\A$ successfully output a valid forgery $((\pk^*_i, m^*_i)_{i \in [\ell^*]}, \allowbreak \Sigma^*)$ of $\SAS_{\Ours}$, $\B$ can extract a solution for the GPS problem.
Let $((\pk^*_i, m^*_i)_{i \in [\ell^*]}, \allowbreak \Sigma^*)$ be a valid forgery output by $\A$.
Then there exists $j^* \in [\ell^*]$ such that $\pk^*_{j^*} = \pk^*$.
By the verification of $\AVer_{\Ours}$, 
\begin{equation*}
e(H_1(t^*), (\prod^{\ell}_{i=1}\widetilde{X_i}\widetilde{Y_i}^{H_2(t^*, m_i^*)})) = e(B^*{}', \widetilde{G}) 
\end{equation*}
holds.
If $\B$ does not abort in the procedure \fbox{If $m^*_{j^*}{}' \in C$, then abort.} in Fig.\ref{Ourreduction}, $m^*_{j^*}{}' \in C$ has not been queried to $\mathcal{O}^{\GPS}_{1}$.

We can see that $B^*{}' = A'{}^{\sum^{\ell^*}_{i=1} (x^*_i + y^*_i \cdot m_i^*{}')}$ holds where $(x_i, y_i) = \sk^*_i$ is a secret key corresponding to $\pk^*_i$.
In the certified-key model, since $\B$ knows all $\{\sk^*_i\}_{i \in [\ell^*]\backslash \{j^*\}}$, $\B$ can compute the following.

\begin{equation*}
B' =  A'{}^{x_{j^*} + m'_{j^*} \cdot y_{j^*}}  = B^*{}' \cdot \left(A'{}^{\sum_{i \in [\ell^*]\backslash \{j^*\}} (x_i + m'_{i} \cdot y_{i})} \right)^{-1}
\end{equation*}

Therefore, if $\B$ does not abort,  and $\B$ a solution $(m^*_{j^*}{}', A'{}, B' )$ for the GPS problem.

We analyze the probability that $\B$ succeeds in forging a signature of $\PS$.
First, we consider the probability that $\B$ aborts at the simulation of signatures.
$\B$ aborts the simulation of $\mathcal{O^{\Sign}}$ if $\B$ queries same $A$ at least twice for $\mathcal{O}^{\GPS}_{1} (A, m'_{(t,j)})$.
To give an upper bound of this probability, it is sufficient to consider the probability that collision is found in $H_1$.
We can bound the probability that $\B$ fails simulating a signature for each signing query by $q_s/|\G^*| = q_s/(p-1)$ where $q_s$ is the number of queries to $\mathcal{O^{\Sign}}$ from $\A$.
By taking union bound, the probability that $\B$  fails simulating signatures through the $\rmEUFCMA$  game is upper bounded by $q_s^2/(p-1)$

Next, we consider the probability that $\B$ aborts at \fbox{If $m^*_{j^*}{}' \in C$, then abort.}  in Fig.\ref{Ourreduction}.
This probability can be bounded by the probability that a collision is found in $H_2$.
We can bound this probability by $q_{H_2}/|\Z_p| = q_{H_2}/p$ where $q_{H_2}$ is the number of queries to $\mathcal{O}^{H_2}$. 

Finally, we summarize the above discussion.
Let $\Adv^{\EUFCMA}_{\SAS_{\Ours}, \A}$ be the advantage of the $\rmEUFCMA$ game for the $\SAS_{\Ours}$ scheme of $\A$.
The advantage of the GPS game $\B$ is 
\begin{equation*}
\Adv^{\GPS}_{\BG, \A} \geq \Adv^{\EUFCMA}_{\SAS_{\Ours}, \A}  - \frac{q_s^2}{p-1} - \frac{q_{H_2}}{p}.  
\end{equation*}
Therefore, we can conclude Theorem \ref{MCLSyncASEUFCMA}.
\qed
\end{proof}

\section{Conclusion}
In this paper, we construct the PS signature-based synchronized aggregate signature scheme which offers the most efficient aggregate signature verification among existing synchronized aggregate signature schemes.
As for the security proof of our scheme, since the reduction technique by Lee et a., \cite{LLY13} could not be applied in the security proof of our scheme, we prove its security by using the GPS assumption in the ROM as a new approach.

If we apply the public-key sharing technique and the randomness re-use technique to the CL signature scheme on type-$3$ pairing, we will obtain the CL signature-based synchronized aggregate signature scheme on type-$3$ pairing.
However, as with our PS signature-based synchronized aggregate signature scheme, group elements of a public key and group element in a signature belong to different groups $\widetilde{\G}$ and  $\G$ respectively,  the reduction technique by Lee et al, \cite{LLY13} would not be applied.
Fortunately, similar to the GPS assumption, the generalized LRSW (GLRSW) assumption \cite{CCDLNU17} that is a variant of the LRSW assumption \cite{LRSW99} was proposed.
We leave a future task to confirm whether our reduction technique can be applied to the CL signature-based synchronized aggregate signature scheme on type-$3$ pairing and prove its security from  the GLRSW assumption.

\section*{Acknowledgement}
A part of this work was supported by JST CREST JP-MJCR2113, JSPS KAKENHI JP21H04879, and the technology promotion association of Tsuruoka KOSEN.
We also would like to thank anonymous referees for their constructive comments.

\bibliographystyle{abbrvurl}
\bibliography{PSSync}

\newpage
\setcounter{tocdepth}{2}
\tableofcontents

\end{document}